\documentclass[a4paper]{article}
\usepackage[utf8]{inputenc}

\usepackage{graphicx,latexsym,amsmath,xcolor,wrapfig}
\usepackage[rflt]{floatflt}

\usepackage[hidelinks]{hyperref}

%
%%%%%%%%%%%%%%%%%%%%%%%%%%%%%%%%%%%%%%%%%%%%%%%%%%%%%%%%%%%%%%%%%
%          DON'T FORGET THE FUNDINGS  ! ! !
%%%%%%%%%%%%%%%%%%%%%%%%%%%%%%%%%%%%%%%%%%%%%%%%%%%%%%%%%%%%%%%%%
%

%\setlength{\textwidth}{6.5in}
%\setlength{\evensidemargin}{0.0in}
%\setlength{\oddsidemargin}{0.0in}
%\setlength{\textheight}{8.5in}
%\setlength{\topmargin}{-0.25in}
%\setlength{\parskip}{2mm}

\newcommand{\etal}{{et al.}}

\newcommand{\spath}[1]{\langle #1 \rangle} % by Tom

\graphicspath{{figures/}}

\usepackage{amsthm}

\newtheorem{lemma}{Lemma}

\newtheorem{corollary}{Corollary}
\newtheorem{theorem}{Theorem}

\newcommand{\floor}[1]{\ensuremath{\left \lfloor #1 \right \rfloor}}
\newcommand{\ceil}[1]{\ensuremath{\left \lceil #1 \right \rceil}}

%opening

\begin{document}
\title{Packing Plane Spanning Trees and Paths in Complete Geometric Graphs%
\thanks{This work was presented at the 26th Canadian Conference on Computational Geometry (CCCG 2014), Halifax, Nova Scotia, Canada, 2014.
The journal version appeared in Information Processing Letters, 124 (2017), 35--41, \url{https://doi.org/10.1016/j.ipl.2017.04.006}.}
}

\author{Oswin Aichholzer%
\thanks{Institute of Software Technology, Graz University of Technology, Austria.
\texttt{oaich@ist.tugraz.at}}
\and Thomas Hackl%
\thanks{Institute of Software Technology, Graz University of Technology, Austria.
\texttt{thackl@ist.tugraz.at}}
\and Matias Korman\thanks{
Tohoku University, Sendai, Japan.
\texttt{mati@dais.is.tohoku.ac.jp}}
\and Marc van Kreveld\thanks{
Department of Information and Computing Sciences, Utrecht University, the Netherlands.
\texttt{m.j.vankreveld@uu.nl}}
\and Maarten L\"offler\thanks{
Department of Information and Computing Sciences, Utrecht University, the Netherlands.
\texttt{m.loffler@uu.nl}}
\and Alexander Pilz\thanks{
Department of Computer Science, ETH Zurich, Switzerland.
\texttt{alexander.pilz@inf.ethz.ch}}
\and Bettina Speckmann\thanks{
Department for Mathematics and Computer Science, TU Eindhoven, the Netherlands.
\texttt{b.speckmann@tue.nl}}
\and Emo Welzl\thanks{
Department of Computer Science, ETH Zurich, Switzerland.
\texttt{welzl@inf.ethz.ch}}
}
\maketitle

\begin{abstract}
We consider the following question: How many edge-disjoint plane spanning trees are contained in a complete geometric graph $GK_n$ on any set $S$ of $n$ points in general position in the plane?
We show that this number is in $\Omega(\sqrt{n})$.
Further, we consider variants of this problem by bounding the diameter and the degree of the trees (in particular considering spanning paths).
\end{abstract}

% \begin{keyword}
% \texttt{elsarticle.cls}\sep \LaTeX\sep Elsevier \sep template
% \MSC[2010] 00-01\sep  99-00
% \end{keyword}

\section{Introduction}
\label{sec:intro}

A \emph{geometric graph} $G = (S, E)$ consists of a set of vertices $S$, which are points in general position in the plane, and a set of edges $E$ which are straight-line connections between two of these points. A long-standing open question is the following: Does every complete geometric graph with $2n$ vertices have a partition of its edges into $n$ plane spanning trees?
For \emph{complete convex geometric graphs} (where all vertices lie in convex position), a positive answer to this question follows from a result by Bernhart and Kainen~\cite{book_embeddings} (see~\cite{partitions_into_trees}). Bose et al.~\cite{partitions_into_trees} gave a characterization of the solutions; for complete convex geometric graphs all spanning trees can, but do not have to, be spanning paths.
They also described a sufficient condition generalizing the convex case and considered a relaxation where the trees are not required to be spanning.

We consider a closely related question: How many edge-disjoint plane spanning trees are contained in a \emph{complete geometric graph} $GK_n$ on any set $S$ of $n$ points in general position in the plane?
 In Section~\ref{sec:trees} we show how to combine a construction by Bose et al.~\cite{partitions_into_trees} with a result by Aronov \etal~\cite{crossing_families} to prove that $GK_n$ contains $\Omega(\sqrt{n})$ edge-disjoint plane spanning trees. Furthermore, if the convex hull of $S$ contains $h$ vertices then we can argue that $GK_n$ contains at least $\left\lfloor\frac{h}{2}\right\rfloor$ edge-disjoint plane spanning trees. We also show that $GK_n$ contains at least 2 plane edge-disjoint spanning trees if $n \geq 4$ and at least 3 edge-disjoint spanning trees if $n \geq 6$.

In Section~\ref{sec:paths} we study the special case of spanning paths. In particular, we first consider the ``regular wheel configuration'', that is, a set of points
$W_{2n}$ which consists of $2n-1$ points regularly spaced on a circle $C$ and a point at the center of $C$. Let $GW_{2n}$ be the complete geometric graph on $W_{2n}$. We can argue that $GW_{2n}$ can be partitioned into $n$ spanning trees. But surprisingly, if $n \geq 3$ then none of these trees can be paths.
If the ``hub'' of the wheel is moved close to the convex hull, then all $n$ spanning trees can be paths.
This raises the following interesting open question:
When does this transition happen and is it gradual?
That is, does the number of spanning paths increase whenever the hub passes over certain diagonals? Note, though, that spanning paths can of course be used in packings which are not partitions.
More specifically, $GW_{2n}$ always contains $n-1$ spanning paths. Only when we ask for a complete partition of the edges we cannot use even a single spanning path.

On the positive side we argue that $GK_n$ contains at least 2 edge-disjoint spanning paths if $n \geq 4$. Obviously it would be desirable to extend our argument to $3$ or more paths or to develop a different line of reasoning to prove that $GK_n$ always contains many paths. Alternatively, it would be very interesting to find point sets which contain only few edge-disjoint plane spanning paths.

We also study packings of edge-disjoint planar spanning trees that have bounded vertex degree and bounded diameter. In particular, in Section~\ref{sec:degree} we show that for any $k \leq \sqrt{n/12}$ any set of $n$ points has $k$ edge-disjoint plane spanning trees with maximum vertex degree $O(k^2)$ and diameter $O(\log(n/k^2))$.

\paragraph{Related work}
A classic related problem in extremal graph theory is the following.
For general geometric graphs, what is the maximum number $f(k,n)$ such that there exists a geometric graph~$G$ of~$n$ vertices and $f(k,n)$ edges such that $G$ contains no $k$ disjoint edges?
Erd\H{o}s~\cite{disjoint_pair} showed that for all $n \geq 3$, $f(2,n) = n$, i.e., any geometric graph with $n+1$ edges contains a disjoint pair.
For general $k$, T\'oth and Valtr~\cite{toth_valtr} gave the lower and upper bounds of $3/2(k-1)n - 2k^2 \leq f(k+1, n) \leq k^3(n+1)$, and also showed that $4n-9 \leq f(4,n) \leq 8.5n$.
\v{C}ern\'y~\cite{cerny} proved $f(3,n) \leq \floor{2.5n}$.
More specifically, the existence of certain plane subgraphs has been investigated.
K\'arolyi, Pach, and T\'oth~\cite{monochromatic_pst} showed that any edge 2-coloring of a complete geometric graph $GK_n$ admits a monochromatic plane spanning tree.
\v{C}ern\'y et al.~\cite{noncrossing_hamiltonian} also considered the existence of plane spanning trees in geometric graphs.
They showed that after removing any set of at most $(1/2 \sqrt{2})\sqrt{n}$ edges from any $GK_n$, the resulting graph still contains a plane spanning path.
Aichholzer et al.~\cite{noncrossing_configurations} considered perfect matchings, subtrees and triangulations as plane subgraphs;
further references to similar results can be found in~\cite{noncrossing_configurations}.
For any geometric graph~$G$, Rivera-Campo~\cite{five_points} showed that if any subgraph of $G$ induced by five vertices has a plane spanning tree, then $G$ as well has a plane spanning tree.
Keller et al.~\cite{keller} gave a characterization of the smallest subgraphs of any~$GK_n$ that share at least one edge with any plane spanning tree of~$GK_n$ (so-called \emph{blockers}).
They showed that if a subgraph~$G$ is a blocker for all plane spanning trees of diameter at most four, then $G$ blocks all plane spanning subgraphs;
if the vertices of $GK_n$ are in convex position, the result already holds for a diameter of at most three.

Also the number of plane spanning trees attracted interest, analogously to classic results on the number of spanning trees (the \emph{tree density}) in general graphs.
Nash-Williams~\cite{nash_williams} and Tutte~\cite{tutte} independently showed that a graph~$G$ has a tree density of $k$ if $|E_P(G)| \geq k (|P|-1)$ for every partition $P$ of $V(G)$, where $E_P(G)$ denotes the set of edges between different members of~$P$.
This was used by Kundu~\cite{kundu} to relate the tree density in general graphs to their edge-connectivity: any $k$-edge-connected graph has at least $\ceil{k-1/2}$ edge-disjoint spanning trees.

Our problem is also closely related to the concept of \emph{$k$-book embeddings} of topological graphs, where, informally, the vertices are considered to be on the spine of a book and each edge of the graph is either on the spine or on exactly one of the $k$ pages, such that no two edges cross.
The \emph{book thickness} of a graph~$G$ is the smallest number~$k$ for which there exists a $k$-book embedding of~$G$.
Bernhart and Kainen~\cite[Theorem~3.4]{book_embeddings} showed that, for $n \geq 4$ vertices, the book thickness of the complete graph is $\ceil{n/2}$.
Their construction of $\floor{n/2}$ edge-disjoint paths directly carries over to packing the same amount of plane spanning paths in the complete convex geometric graph~\cite{partitions_into_trees}.

A concept between graph-theoretical thickness and book thickness was later developed by Dillencourt, Eppstein, and Hirschberg~\cite{thickness}:
given an abstract graph~$G$, the \emph{geometric thickness} of~$G$ is the smallest number~$k$ such that there exists a straight-line drawing of the graph that can be partitioned into $k$ plane subgraphs.
They showed that the geometric thickness of the (abstract) complete graph is between $\ceil{(n/5.646)+0.342}$ and $\ceil{n/4}$.

Since the initial presentation of this work, the problem has attracted further attention.
Most prominently, the lower bound on the number of plane edge-disjoint spanning trees has been improved to~$\lfloor n/3 \rfloor$ by Garc{\'\i}a~\cite{g-nepstg-15}.
Schnider~\cite{double_stars} considers the special case of double stars (i.e., trees with only two interior nodes), showing that a partition into such trees does not always exist, and provides necessary as well as sufficient conditions for its existence.

% Finally, we remark that, motivated by the current work, Pilz and Welzl~\cite{ooo} defined a hierarchy on point set order types showing that, if there exists a point set without any partition into spanning trees, then there exists a maximal set in this hierarchy that also does not have such a partition.
% Using this insight, the conjecture could be verified for $n \leq 10$.

\section{Packing Spanning Trees}\label{sec:trees}

Recall that $GK_{n}$ is the complete geometric graph on any set $S$ of $n$ points in general position in the plane.

\begin{theorem}\label{thm_sqrtlayers}
$GK_{n}$ contains $\Omega(\sqrt{n})$ edge-disjoint plane spanning trees.
\end{theorem}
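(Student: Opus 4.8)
The plan is to combine two known ingredients that the introduction already flags: the construction of Bose et al.~\cite{partitions_into_trees} that packs many plane spanning paths when the point set is in convex position, and the crossing-family result of Aronov et al.~\cite{crossing_families}, which guarantees that any set of $n$ points in general position contains two subsets $A$ and $B$, each of size $\Omega(\sqrt{n})$, such that every segment with one endpoint in $A$ and one endpoint in $B$ crosses every other such segment. The idea is that such a crossing family behaves, combinatorially, like a complete bipartite geometric graph in convex position, so we can route a spanning tree through it using one ``hub'' vertex per tree.

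\medskip
First I would invoke Aronov et al.\ to extract the two sets $A,B$ with $|A|,|B| \in \Omega(\sqrt{n})$ and the all-crossing property between $A$ and $B$. Set $k = \min(|A|,|B|) \in \Omega(\sqrt{n})$; this will be the number of trees we produce. Label the vertices of $A$ as $a_1,\dots,a_k$ and of $B$ as $b_1,\dots,b_k$ (discarding extra points if the sets are unequal in size — the remaining points of $S$ are handled below). For each index $i \in \{1,\dots,k\}$ I would build the $i$-th spanning tree $T_i$ as follows: pick $b_i$ as the root, connect $b_i$ to every vertex of $A$ via the segments $a_j b_i$, and then hang the $n - k$ vertices of $S \setminus A$ (which includes all of $B$, in particular $b_i$ itself is the root, and the remaining $n-2k$ ``leftover'' points) off of a single designated vertex of $A$, say $a_i$, using the ``shift'' trick: tree $T_i$ uses $a_i$ as the secondary hub to which all non-$A$, non-$b_i$ vertices attach. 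Edge-disjointness is obtained by cyclically rotating the roles: $T_i$ uses exactly the star edges $\{a_j b_i : j\}$ from the bipartite part and a matching-like set of edges $\{a_i x : x \in S\setminus(A\cup\{b_i\})\}$ from $a_i$; since the pair $(b_i, a_i)$ is different for each $i$, and the bipartite star centers $b_i$ are distinct, no edge is reused.

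\medskip
The remaining content is to verify planarity of each $T_i$ and to make the edge-accounting actually work — this is the step I expect to be the main obstacle, because the naive ``$n-2k$ leftover points all attach to $a_i$'' idea can create crossings, both among those pendant edges and between them and the bipartite star. The cleaner route, and the one I would actually pursue, is to recurse: apply the argument to the subgraph induced by $S \setminus (A \cup B)$ is wasteful; instead, mimic the Bose et al.\ construction directly on $A \cup B$ viewed as a convex-position-like structure. Concretely, the all-crossing property means $A\cup B$, in the cyclic order around their common crossing region, looks like $a_1,\dots,a_k,b_1,\dots,b_k$ on a convex polygon with $A$ on one arc and $B$ on the other; Bose et al.\ (equivalently Bernhart--Kainen~\cite{book_embeddings}) give $\lfloor |A\cup B|/2\rfloor \ge k$ edge-disjoint plane spanning paths of this $2k$-gon. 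I would then argue that each such plane spanning path $P_i$ of $A\cup B$ can be extended to a plane spanning tree of all of $S$ by attaching each leftover point $x \in S\setminus(A\cup B)$ to a suitable vertex of $P_i$ — and crucially the same attachment choice can be made simultaneously edge-disjoint across all $i$ by again rotating: vertex $x$ attaches to the $i$-th vertex along $P_i$ in some canonical order, so the $k$ edges incident to $x$ across the $k$ trees go to $k$ distinct vertices and hence are distinct edges. Planarity of the extension needs the attachment edge from $x$ to not cross $P_i$; since $P_i \subseteq A\cup B$ lies in a bounded ``lens'' region and we are free to choose which vertex of $A\cup B$ a given $x$ connects to, a rotational/angular argument picks, for each $x$, an ordering of the vertices of $A\cup B$ by angle as seen from $x$ so that the $i$-th choice is always visible.

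\medskip
So the overall skeleton is: (1) get the $\Omega(\sqrt{n})$-size crossing family $A,B$ from \cite{crossing_families}; (2) on $A\cup B$ run the convex construction of \cite{partitions_into_trees}/\cite{book_embeddings} to get $k \in \Omega(\sqrt n)$ edge-disjoint plane spanning paths; (3) extend each path to a spanning tree of $S$ by attaching the remaining $n - 2k$ points, using a rotation scheme so that the attachment edges are pairwise edge-disjoint across the $k$ trees and each individual attachment preserves planarity; (4) conclude $k \in \Omega(\sqrt n)$ edge-disjoint plane spanning trees. The delicate part, step (3), is really about showing that ``leftover'' points can be absorbed without destroying planarity or edge-disjointness; I would handle it by, for each leftover point, sorting $A \cup B$ radially around it and committing the $i$-th tree to use the $i$-th vertex in that radial order, which both guarantees distinctness of edges and lets a short visibility argument rule out crossings.
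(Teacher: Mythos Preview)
Your first construction is very close to what the paper does, but the partition rule is wrong and this breaks both edge-disjointness and planarity. The paper takes each segment $e=\overline{pq}$ of the crossing family and builds the double star $T_e$: connect every point left of the supporting line $\ell_e$ to $p$, every point right of $\ell_e$ to $q$, and add $e$ itself. Planarity is then immediate, and edge-disjointness drops out of the crossing property: if $\overline{pq}$ and $\overline{rs}$ cross with $r$ to the left of $\ell_{\overline{pq}}$, then $T_{\overline{pq}}$ contains $\overline{pr},\overline{qs}$ while $T_{\overline{rs}}$ contains $\overline{ps},\overline{qr}$, so no edge is shared. Your version instead fixes global sets $A,B$ once and for all and, in $T_i$, sends all of $A$ to $b_i$ and everything else to $a_i$. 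But then the edge $a_{i'}b_i$ lies in $T_i$ (as part of the $b_i$-to-$A$ star, with $j=i'$) and also in $T_{i'}$ (as part of the $a_{i'}$-to-$(S\setminus A)$ star, since $b_i\in S\setminus(A\cup\{b_{i'}\})$), so the trees are not edge-disjoint. The missing idea is precisely to let the \emph{line through the crossing-family edge} dictate the bipartition, not a fixed global split.

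Your fallback construction has two further gaps. First, Aronov et~al.\ give a crossing family (equivalently, mutually avoiding sets), but this does not force $A\cup B$ into convex position; a point of $A$ can sit in the interior of $\mathrm{conv}(A\cup B)$ even when $A$ and $B$ are individually convex and mutually avoiding, so the Bernhart--Kainen zigzag paths on $A\cup B$ need not be plane. Second, your extension step is not sound: attaching a leftover point $x$ to the $i$-th vertex of $A\cup B$ in radial order around $x$ gives no control over crossings with the path $P_i$, nor over crossings among the attachment edges of different leftover points within the same tree $T_i$. The paper never faces this difficulty because the line-based double star already spans all of $S$ in one shot, with nothing left to extend.
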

\begin{proof}
Let $S$ be a set of $n$ points in the plane, and let $F$ be a set of $k$ edges (pairs of points of $S$) such that each pair of edges in $F$ has an interior crossing.
The set $F$ is called a \emph{crossing family}.
We claim that there exists a set of $k$ edge-disjoint plane spanning trees on $S$. We use a construction similar to the \emph {double stars} by Bose \etal{}~\cite{partitions_into_trees}. For each edge $e = \overline{pq} \in F$, let $\ell_e$ be the supporting line of $e$. We connect all points to the left of $\ell_e$ to $p$, and all points to the right of $\ell_e$ to $q$. These edges together with $e$ form a tree~$T_e$ (see Figure~\ref{fig:doublestar}).

%{
%\begin{floatingfigure}[r]{70pt}
%\includegraphics[width=70pt]{doublestar1}
%\vspace{-20pt}
%\end{floatingfigure}
To see that this yields $k$ edge-disjoint trees, consider two trees $T_{\overline{pq}}$ and $T_{\overline{rs}}$. Suppose some edge is in both trees. Then one of its endpoints must be $p$ or $q$, and the other endpoint must be $r$ or $s$. However, if $r$ lies to the left of $\ell_{\overline{pq}}$, then $\overline {pr}$ and $\overline {qs}$ are in $T_{\overline{pq}}$ and $\overline {ps}$ and $\overline {qr}$ are in $T_{\overline{rs}}$, and vice versa if $r$ lies to the right of $\ell_{\overline{pq}}$.
%\begin{wrapfigure}{r}{0pt}
%\raggedright
%\includegraphics{doublestar1}
%\end{wrapfigure}
%}

Aronov \etal~\cite{crossing_families} showed that any set of $n$ points contains a crossing family of size $\sqrt{n/12}$.
The theorem follows immediately.
\end{proof}

\begin{figure}[h]
\centering
\includegraphics{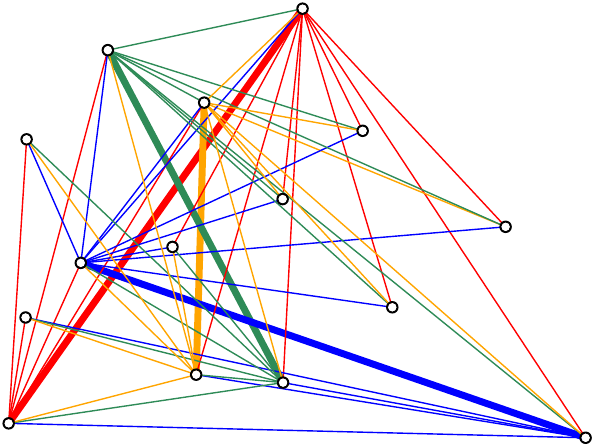}
\caption{A set of $15$ points with $4$ pairwise crossing edges.}
\label{fig:doublestar}
\end{figure}

\smallskip\noindent
In a set of $h$ points in convex position, there is always a crossing family of size $\floor{h/2}$.
The proof of Theorem~\ref{thm_sqrtlayers} therefore immediately implies the following.

\begin{corollary}
The complete graph of a set $S$ of $n$ points, of which $h$ are in convex position, contains at least $\floor{h/2}$ edge-disjoint plane spanning trees.
\end{corollary}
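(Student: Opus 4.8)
The plan is to reduce the corollary directly to Theorem~\ref{thm_sqrtlayers} by exhibiting a crossing family of the required size inside $S$. The theorem's proof only uses the point set through the existence of a crossing family: given $k$ pairwise crossing edges, the double-star construction yields $k$ edge-disjoint plane spanning trees of the \emph{whole} point set $S$ (the line $\ell_e$ of each crossing edge $e$ splits all $n$ points, not just the convex-position ones, into left and right parts). So it suffices to produce a crossing family of size $\floor{h/2}$ using only the $h$ points of $S$ that are in convex position.

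First I would isolate the $h$ points in convex position, say $p_0, p_1, \dots, p_{h-1}$ in convex (cyclic) order, and ignore the remaining $n-h$ points entirely for the purpose of choosing the family. Then I would form the $\floor{h/2}$ ``long'' diagonals by pairing nearly-antipodal vertices: edge $e_i = \oli{p_i\,p_{i+\floor{h/2}}}$ for $i = 0, 1, \dots, \floor{h/2}-1$ (indices mod $h$). The key combinatorial fact is that any two such diagonals cross in the interior of the convex polygon: two chords $\oli{p_ap_b}$ and $\oli{p_cp_d}$ of a convex polygon cross iff exactly one of $c,d$ lies on the open arc from $p_a$ to $p_b$. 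For the pairing above, the four indices $i, j, i+\floor{h/2}, j+\floor{h/2}$ (with $0 \le i < j < \floor{h/2}$) appear around the polygon in the interleaved cyclic order $p_i, p_j, p_{i+\floor{h/2}}, p_{j+\floor{h/2}}$, so each pair of these diagonals separates the other pair's endpoints and hence they cross. A short case check (treating $h$ even and $h$ odd separately, since for odd $h$ the ``antipodal'' offset $\floor{h/2}$ is not quite symmetric) confirms the interleaving holds in all cases.

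Having established that these $\floor{h/2}$ edges form a crossing family, I would then invoke exactly the argument of Theorem~\ref{thm_sqrtlayers}: applying the double-star construction to this family produces $\floor{h/2}$ pairwise edge-disjoint plane spanning trees on all of $S$. Each $T_{e_i}$ is plane because it is a double star, and the edge-disjointness argument from the theorem's proof goes through verbatim (it depends only on $F$ being a crossing family, not on the positions of the points outside the family). This gives the claimed bound.

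The one point requiring care — the ``main obstacle,'' though it is minor — is the parity bookkeeping for odd $h$: one must check that the $\floor{h/2}$ chords with offset $\floor{h/2} = (h-1)/2$ are genuinely pairwise crossing and that we are not accidentally reusing a vertex in two edges in a way that would make the double stars share an endpoint-pair. In fact distinct edges $e_i, e_j$ can share at most \dots\ no vertex at all when $h$ is even (the pairing $i \mapsto i + h/2$ is a perfect matching on the $h$ vertices), while for odd $h$ one vertex is left unpaired and again no vertex is used twice, so each $e_i$ has four distinct endpoints from $e_j$ and the disjointness argument of Theorem~\ref{thm_sqrtlayers} applies without modification. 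Everything else is a direct citation of the theorem.
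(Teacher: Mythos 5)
Your proposal is correct and takes essentially the same route as the paper: the paper's proof consists precisely of the observation that $h$ points in convex position contain a crossing family of size $\floor{h/2}$ (which you verify explicitly via the near-antipodal diagonals) and then an immediate application of the double-star construction from Theorem~\ref{thm_sqrtlayers} to all of $S$. Your parity check for odd $h$ and the remark that edge-disjointness only depends on the crossing-family structure are fine, just more detail than the paper spells out.
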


\begin{theorem}\label{thm:2trees}
If $n \geq 4$ then $GK_{n}$ contains at least 2 edge-disjoint plane spanning trees.
\end{theorem}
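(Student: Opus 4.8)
The plan is to distinguish cases according to the number $h$ of vertices on the convex hull of $S$. If $h \geq 4$, then the corollary above already produces $\floor{h/2} \geq 2$ edge-disjoint plane spanning trees and there is nothing more to prove. So the whole difficulty is concentrated in the remaining case $h = 3$: the convex hull is a triangle $abc$, with $n-3 \geq 1$ points in its interior.

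In that case I would fix the hull vertex $a$ and sort the other $n-1$ points $p_1, \dots, p_{n-1}$ by angle around $a$. Because $a$ is a hull vertex whose hull-neighbors are $b$ and $c$, these two points are the extremes of the angular order; say $p_1 = b$ and $p_{n-1} = c$. Let $\Gamma_a$ be the \emph{fan} consisting of the $n-1$ spokes $\overline{a p_i}$ and the $n-2$ consecutive edges $\overline{p_i p_{i+1}}$. This graph is plane: the spokes pairwise share the apex, and each edge $\overline{p_i p_{i+1}}$ lies inside the convex wedge at $a$ bounded by the rays through $p_i$ and $p_{i+1}$, hence crosses neither another spoke nor a non-adjacent consecutive edge. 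Adding the hull edge $\overline{bc} = \overline{p_1 p_{n-1}}$ keeps the drawing plane, since $\overline{bc}$ lies on the triangle's boundary whereas every spoke $\overline{a p_j}$ with $2 \leq j \leq n-2$ and every edge $\overline{p_i p_{i+1}}$ runs through the triangle's interior. Combinatorially the result is the wheel with hub $a$ and rim cycle $p_1 p_2 \cdots p_{n-1} p_1$, which has exactly $2(n-1)$ edges.

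It then remains to split this wheel into two edge-disjoint spanning trees; each will automatically be plane, being a subgraph of a plane graph. I would let $T_1$ be the star $\{\overline{a p_1}, \dots, \overline{a p_{n-2}}\}$ together with the single edge $\overline{p_{n-2} p_{n-1}}$ that reattaches the one remaining vertex, and let $T_2$ consist of everything else, namely the spoke $\overline{a p_{n-1}}$, the edges $\overline{p_1 p_2}, \dots, \overline{p_{n-3} p_{n-2}}$, and the closing edge $\overline{p_{n-1} p_1}$ -- which together form the spanning path $\spath{p_{n-2}, \dots, p_1, p_{n-1}, a}$. A direct count shows each of $T_1$ and $T_2$ has $n-1$ edges and that they partition the $2(n-1)$ edges of the wheel, and each is plainly connected and acyclic.

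The planarity verifications for $\Gamma_a + \overline{bc}$ and the check that $T_1$ and $T_2$ are spanning trees are routine. The only step that is not routine is the structural reduction itself: recognizing that the triangular-hull case is the sole obstacle (so that a plain convex-position argument cannot suffice), and that adjoining the hull edge opposite to $a$ turns the plane fan into a plane wheel -- supplying precisely the one extra edge that a second spanning tree needs.
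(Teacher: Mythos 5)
Your proof is correct, but it follows a genuinely different route from the paper. The paper avoids any case distinction on the hull: it picks an edge $e=rb$ spanned by $S$ with exactly two points $p,q$ on one side, builds two edge-disjoint paths $\langle q,r,p,b\rangle$ and $\langle p,q,b,r\rangle$ on the four-point set (whether it is convex or a triangle with one interior point), and then attaches every remaining point of $S$ to $b$ in one tree and to $r$ in the other; planarity is immediate because the four-point kernel is separated from the rest by the line through $e$. You instead split on the hull size $h$: for $h\geq 4$ you invoke the corollary to Theorem~\ref{thm_sqrtlayers} (which is legitimate, since it precedes this theorem and for $h=4$ the two diagonals of the convex quadrilateral form a crossing pair), and for $h=3$ you build the plane wheel consisting of the fan from a hull vertex $a$, the angular rim path, and the opposite hull edge $\overline{bc}$, and decompose its $2(n-1)$ edges explicitly into a star-plus-pendant tree and a spanning path; all the planarity and tree checks you sketch go through (the angular wedge at $a$ has angle less than $\pi$, so consecutive rim edges stay in disjoint wedges, and the interior segments cannot cross the boundary edge $\overline{bc}$). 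The paper's argument buys uniformity and brevity (one construction, no hull cases), and it is the same template it reuses for the three-tree result; your argument buys reuse of the crossing-family machinery (echoing the paper's own remark that a crossing pair already settles $n\geq 5$), and in the hard triangular-hull case it produces a stronger-looking conclusion, namely that one of the two trees can be taken to be a plane spanning path, via a clean wheel decomposition.
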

\begin{proof}
\begin{figure}[htb]
\centering
\includegraphics{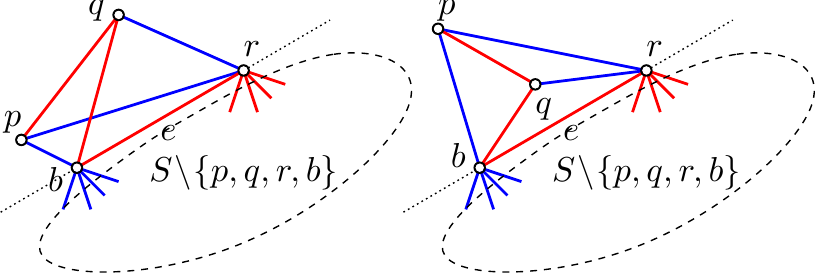}
\caption{The two cases for constructing two edge-disjoint plane
  spanning trees on $S$.}
\label{fig:2trees}
\end{figure}
Let $S$ be a set of $n$ points in the plane and let $e=rb$ be an edge spanned by $S$ having exactly 2 points ($p$ and $q$) of $S$ on one side (i.e., on one side of the straight line supporting $e$).
The set $\{p,q,r,b\}$ is either in convex position (Case~1; see \figurename~\ref{fig:2trees}~(left)) or forms a triangle with one interior point (Case~2; see \figurename~\ref{fig:2trees}~(right)).
Note that $e$ has to be an edge of the convex hull of $\{p,q,r,b\}$.
W.l.o.g., let $pqrb$ be the convex polygon in Case~1 and let $q$ be the point inside the triangle $prb$ in Case~2.
In both cases we construct two edge-disjoint spanning trees on $\{p,q,r,b\}$, $\spath{q,r,p,b}$ (blue) and $\spath{p,q,b,r}$ (red).
To get two edge-disjoint spanning trees on $S$ we connect all points of $S\setminus\{p,q,r,b\}$ with $b$ (for the blue tree) and with $r$ (for the red tree).
\end{proof}

Note that the proof of Theorem~\ref{thm_sqrtlayers} also immediately implies Theorem~\ref{thm:2trees} for $n\geq5$, because then there always exists a pair of crossing edges in $GK_{n}$.
For $n=4$ the two cases for $\{p,q,r,b\}$ shown in \figurename~\ref{fig:2trees} serve as a proof.

\begin{lemma}\label{lem:6points}
$GK_{6}$ contains 3 edge-disjoint plane spanning trees.
\end{lemma}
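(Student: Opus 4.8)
$GK_6$ contains 3 edge-disjoint plane spanning trees.

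The plan is to reduce to a small number of combinatorial types of $6$-point sets and verify each one by an explicit construction, rather than to find a single uniform argument. The relevant invariant is the number $h$ of points on the convex hull of $S$; for $n=6$ we have $h\in\{3,4,5,6\}$. When $h\ge 6$, the Corollary already gives $\lfloor h/2\rfloor = 3$ edge-disjoint plane spanning trees, so that case is done. The remaining cases are $h\in\{3,4,5\}$, and for each one would appeal to the order type (equivalently, the set of triangle orientations) of the $6$ points: there are only finitely many order types of $6$ points (a well-known small number), so the argument is in principle a finite case check, which one would present compactly by grouping order types according to $h$ and by exploiting symmetry.

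The workhorse construction is the "extend a partial solution by adding pendant edges to a designated vertex" idea already used in the proofs of Theorems~\ref{thm_sqrtlayers} and~\ref{thm:2trees}: if we can partition a small sub-configuration into plane trees, each rooted so that a different leaf/hub is exposed on the outer boundary, then every point not in the sub-configuration can be joined by a straight edge to the appropriate hub without creating crossings, provided that hub "sees" all those points. Concretely, for $h=5$ one would take the convex hull (a convex pentagon) and use the crossing family of size $\lfloor 5/2\rfloor = 2$ on it to get two edge-disjoint plane double-star trees; the single interior point is then attached to the two hubs of these trees and, together with the two remaining hull edges, must be arranged into a third plane spanning tree that is edge-disjoint from the first two — this is the step that needs a concrete drawing. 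For $h=4$ (a convex quadrilateral with two interior points) and $h=3$ (a triangle with three interior points) one would similarly start from a good partition of a well-chosen quadrilateral or triangle into plane trees and route the remaining points to exposed vertices; here the order type of the interior points relative to each hull edge determines which of a few sub-cases applies.

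The main obstacle is the $h=3$ and $h=4$ cases: with so few hull vertices there are not enough "boundary hubs" to absorb all remaining points into $2$ trees via the pendant trick, so the third tree (and sometimes the second) has to be built more cleverly, and the crossing constraints among three trees that all touch interior points are genuinely tight — this is essentially why the lemma is stated separately for $n=6$ rather than following from a clean general bound. I would handle this by first identifying, inside $S$, a $5$-point sub-configuration in convex position whenever one exists (by the Erdős–Szekeres bound, $6$ points need not contain $5$ in convex position, but $5$ often occur), reducing to the $h=5$-type argument on a subset and attaching the last point; and when no $5$ points are in convex position, falling back to the (few) exceptional order types and exhibiting the three trees explicitly, ideally accompanied by a figure analogous to \figurename~\ref{fig:2trees}. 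The bookkeeping — checking edge-disjointness and planarity of each of the three trees in each sub-case — is routine but must be done carefully; I expect the write-up to consist mostly of one or two labeled figures plus a short verification paragraph per case.
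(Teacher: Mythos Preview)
Your plan is essentially the paper's own approach: a finite case check over the order types of six-point sets. The paper simply cites that there are exactly $16$ order types for $n=6$ and verifies each one by an explicit figure; no further argument is given. Your organization by hull size $h$ is a reasonable way to structure such a check, and the $h=6$ case does follow immediately from the Corollary as you say, but the reorganization does not buy much: for $h\le 5$ you still end up deferring to explicit per-order-type constructions, and the $15$ remaining order types are not substantially reduced by the Erd\H{o}s--Szekeres detour (six points need not contain a convex pentagon, so that reduction does not apply uniformly).

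One point in your $h=5$ sketch is slightly off. The double-star construction from Theorem~\ref{thm_sqrtlayers} already produces spanning trees on all six points: the interior point lies on one side of each supporting line $\ell_e$ and is therefore already a leaf of each double star, so there is no separate ``attach the interior point'' step. The genuine work is building the \emph{third} plane spanning tree from the remaining edges, and that depends on which of the several $h=5$ order types you are in --- which is exactly the per-case verification the paper does by picture. In short, your proposal is correct in outline and converges to the same brute-force enumeration the paper uses; the extra scaffolding adds structure but not a shortcut.
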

\begin{proof}
For $n=6$ there exist 16 combinatorially different point sets (order types)~\cite{aak-eotsp-01a}.
It is easy to check that each of these 16 cases allows for 3 edge-disjoint plane spanning trees packed on $GK_{6}$ (see \figurename~\ref{fig:6points}).
\end{proof}
%
% Using the order type database for small point sets~\cite{a-otdb-06} it
% can be easily checked that $GK_{8}$ and $GK_{9}$ each contain 4 edge-disjoint plane spanning trees.
%
Using the order type database for small point sets~\cite{a-otdb-06}
it can be easily checked that $GK_{8}$ and $GK_{9}$ each contain 4 edge-disjoint plane spanning trees,
and that $GK_{10}$ contains 5 edge-disjoint plane spanning trees.
(The latter has been obtained by reducing the set of order types to so-called crossing-maximal ones, as characterized in~\cite{pw-oo-15}.)

\begin{figure}[h]
\centering
\includegraphics{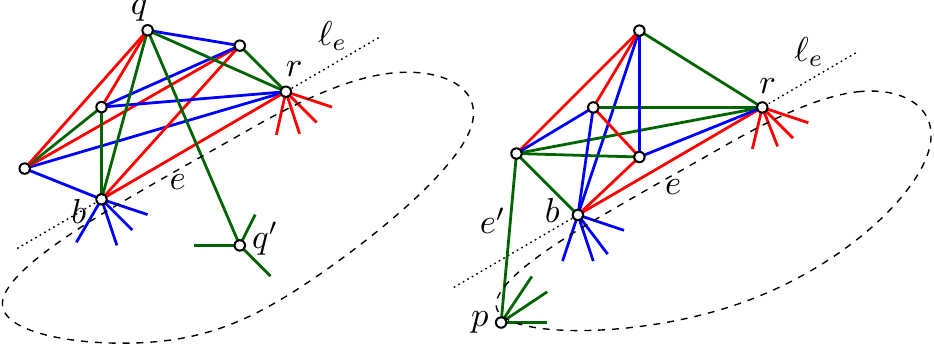}
\caption{Two examples depicting the construction of three edge-disjoint plane
  spanning trees on $S$.}
\label{fig:3trees}
\end{figure}

\begin{theorem}
If $n \geq 6$ then $GK_{n}$ contains at least 3 edge-disjoint plane spanning trees.
\end{theorem}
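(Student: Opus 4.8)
The plan is to mirror the proof of Theorem~\ref{thm:2trees}: isolate a small \emph{core} of points near the boundary, build the three trees on the core, and extend each of them to all of $S$ by attaching the remaining points as leaves at one designated core vertex (its \emph{center}). The smallest case, $n=6$, is already covered by Lemma~\ref{lem:6points}, and $n=7$ can be handled either by the same kind of direct inspection or by the construction below; so it suffices to give a construction that works once $n$ is large enough to guarantee the structure we want.

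First I would fix the core. By the usual continuous rotation argument (the one underlying the existence of the edge used in Theorem~\ref{thm:2trees}) there is, for $n\ge 6$, an edge $e=\overline{rb}$ of $GK_n$ with exactly three points $p,q,s$ of $S$ strictly on one side; let $H^{+}$ be that open half-plane, $H^{-}$ the other one (containing the remaining $n-5$ points), and $C=\{r,b,p,q,s\}$. One can take $r$ to be a convex hull vertex of $S$ and $\overline{rb}$ to be a hull edge of $C$. Exactly as in Theorem~\ref{thm:2trees}, every edge with both endpoints in $C$ lies in $\overline{H^{+}}$ and meets the line through $e$ only at $r$, at $b$, or along $\overline{rb}$, while every edge from $r$ (resp.\ $b$) to a point of $H^{-}$ lies in $\overline{H^{-}}$ and meets that line only at $r$ (resp.\ $b$); hence no such edge crosses a core edge. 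Consequently, if $T_1^{C}\ni r$ and $T_2^{C}\ni b$ are two edge-disjoint plane spanning trees of the five core points, then $T_1:=T_1^{C}\cup\{\overline{rv}:v\in H^{-}\}$ and $T_2:=T_2^{C}\cup\{\overline{bv}:v\in H^{-}\}$ are two edge-disjoint plane spanning trees of $S$, with centers $r$ and $b$.

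The third tree is the delicate part. Since $K_5$ has only $\binom{5}{2}=10$ edges and $T_1^{C}\cup T_2^{C}$ already uses eight of them, the skeleton of $T_3$ inside $C$ cannot be a spanning tree of $C$; moreover every edge of $GK_n$ from $H^{-}$ to $\{r,b\}$ is already used, so in $T_3$ the points $r$ and $b$ must be reached through core edges, whereas several far points must reach the core through $\{p,q,s\}$. I would therefore build $T_3$ as the union of (i) a plane spanning tree $F$ of the $n-5$ points of $H^{-}$, taken from a triangulation of them (so $F$ uses none of the edges $\overline{rv},\overline{bv}$); (ii) a small number of ``bridge'' edges from $H^{-}$ to $\{p,q,s\}$, chosen not to be used by $T_1,T_2$ and not to cross $F$, $T_1$, or $T_2$; and (iii) the two edges of $\binom{C}{2}\setminus(T_1^{C}\cup T_2^{C})$, which must be chosen (by choosing $T_1^{C},T_2^{C}$ appropriately) so that, together with the bridges, they connect all five core points into $F$ — for instance so that $\overline{rb}$ is one of the two leftover edges. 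Whether all of this fits together — in particular whether valid bridge edges exist that avoid the two leaf-stars of $T_1,T_2$, all of whose edges lie in $\overline{H^{-}}$ — depends only on the order type of $\{r,b,p,q,s\}$ subject to the constraint that $\overline{rb}$ is a hull edge of $C$ with $p,q,s$ on one side. This leaves only a handful of cases; in each I would exhibit explicit choices of $T_1^{C}$, $T_2^{C}$, the two leftover core edges, and the bridges, the two representative cases being those of Figure~\ref{fig:3trees}.

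The main obstacle is exactly step (iii) together with the placement of the bridge edges: the edge-disjointness budget on the five-point core is extremely tight, so once $T_1^{C}$ and $T_2^{C}$ are fixed the two remaining core edges are forced, and essentially all the freedom for $T_3$ sits in $H^{-}$ and in the bridges crossing the line through $e$. The real work is to check, in every admissible configuration of the core, that one can simultaneously (a) pick $T_1^{C},T_2^{C}$ so that the leftover core edges plus a few bridges connect $C$ to $F$, and (b) route those bridges from $H^{-}$ into $\{p,q,s\}$ without crossing the large leaf-stars of $T_1$ and $T_2$; everything else is routine.
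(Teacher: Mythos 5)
There is a genuine gap here: the part of your argument that actually carries the difficulty of the theorem is deferred rather than proved. You choose a five-point core $C=\{r,b,p,q,s\}$, and as you yourself observe, $K_5$ has only $\binom{5}{2}=10$ edges while three spanning trees on five points would need $12$, so the third tree cannot live on the core and must be assembled from a tree $F$ on the $n-5$ far points, two leftover core edges, and bridges into $\{p,q,s\}$. But whether this assembly is always possible is exactly what you leave open (``the real work is to check, in every admissible configuration \dots''), and the feasibility questions are not cosmetic. Since every edge from $r$ or $b$ into $H^{-}$ is consumed by the two leaf-stars, both $r$ and $b$ must be reached in $T_3$ through the two leftover core edges; if $\overline{rb}$ is one of them, the other is forced to join $\{r,b\}$ to one of $p,q,s$, so you must show that \emph{every} order type of the core admits two edge-disjoint plane spanning trees whose union misses precisely such a prescribed pair — this is an unverified case analysis. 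Moreover, any bridge from $H^{-}$ to a core point crosses the line $\ell_e$ and may cross the segment $\overline{rb}$ itself, which would destroy planarity of $T_3$ if $\overline{rb}\in T_3$; you give no argument that suitable bridges exist in all configurations. (Separately, you impose that bridges avoid crossing $T_1$ and $T_2$; that is not required — only each tree individually must be plane — but dropping it does not close the gap above.)

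For contrast, the paper sidesteps all of this by taking a \emph{six}-point core: an edge $e=rb$ with exactly four points of $S$ on one side. On six points $K_6$ has $15$ edges, so by Lemma~\ref{lem:6points} (verified over the 16 order types) the core admits a full partition into three plane spanning trees, one of which contains $e$. The red and blue trees are extended by stars at $r$ and $b$ as in your construction, and only the green tree needs a single bridge: completing the green core tree to a triangulation yields a point $q$ with $qrb$ a triangle, so any edge from $q$ crossing $e$ avoids all green core edges; either some $qq'$ with $q'\in S\setminus S'$ crosses $e$, or a convex hull edge of $S$ crossing $\ell_e$ is used instead, and in both cases the remaining far points are attached as a star at the bridge's far endpoint (Figure~\ref{fig:3trees}). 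If you want to keep your five-point-core plan, you would have to supply the missing case analysis and the bridge-existence argument; as written, the proposal is a programme rather than a proof.
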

\begin{proof}
Let $S$ be a set of $n$ points in the plane and let $e=rb$ be an edge spanned by $S$ having exactly 4 points of $S$ on one side (i.e., on one side of the straight line $\ell_e$ supporting $e$).
Let $S'$ be the set of 6 points containing $r$, $b$, and the exactly 4 points on one side of $e$.
By Lemma~\ref{lem:6points}, $S'$ contains 3 edge-disjoint plane spanning trees.
For simplicity we call them red, blue, and green.
W.l.o.g., assume that $e$ is part of the red tree.
Note that each point of $S'$ is incident to all three trees, and that $r$ and $b$ are extremal points for $S\setminus (S'\setminus\{r,b\})$.
We construct a red and a blue plane spanning tree by connecting $r$ and $b$, respectively, with all points in $S\setminus S'$.

Next we construct the third (green) plane spanning tree on $S$.
Note that the green plane spanning tree on $S'$ can be completed to a triangulation $T$.
Let $q$ be the point of $S'\setminus\{r,b\}$ such that $qrb$ is a triangle in $T$.
Observe that any edge incident to $q$ and crossing $e$ does not cross a green edge.

Assume that there exists a point $q'\in(S\setminus S')$ such that the edge $qq'$ crosses~$e$.
Then we connect $q$ and $q'$ with a green edge and complete the green
plane spanning tree by connecting all points in
$S\setminus(S'\cup\{q'\})$ with $q'$.
See \figurename~\ref{fig:3trees}~(left).

If such a point $q'$ does not exist, then there has to exist an edge $e'$ of the convex hull of $S$, such that $e'$ crosses $\ell_e$.
Denote by $p$ the endpoint of $e'$ in $S\setminus S'$.
We color $e'$ green and complete the green
plane spanning tree by connecting all points in
$S\setminus(S'\cup\{p\})$ with $p$.
See \figurename~\ref{fig:3trees}~(right).
\end{proof}

\section{Packing Spanning Paths}\label{sec:paths}

Let $W_{2n}$ be a set of $2n$ points in the ``regular wheel configuration'' in the plane. $W_{2n}$ consists of $2n-1$ points regularly spaced on a circle $C$ and a point at the center of $C$. Let $GW_{2n}$ be the complete geometric graph on $W_{2n}$.

\begin{figure*}[t]
\centering
\includegraphics[width=\columnwidth]{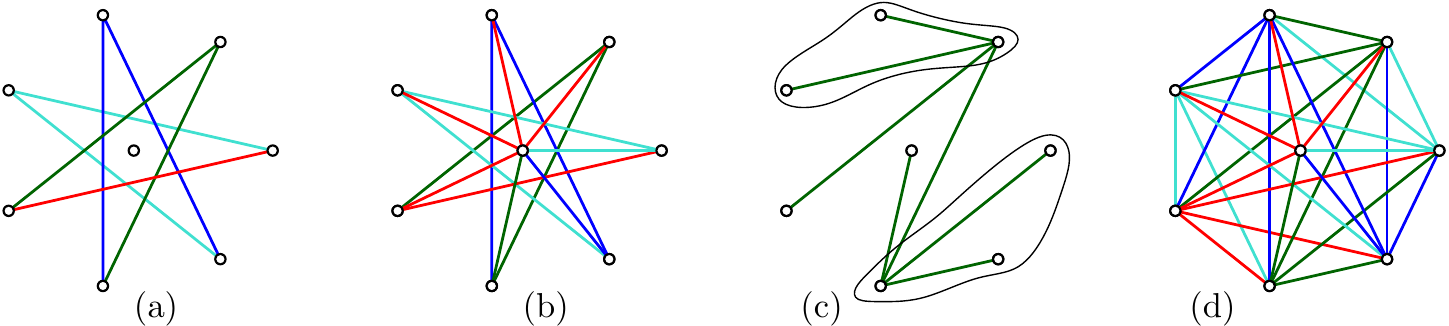}
\caption{The graph $GW_{2n}$ cannot have plane spanning paths if it is partitioned into
plane spanning trees.}
\label{fig:wheel}
\end{figure*}

\begin{theorem}
$GW_{2n}$ can be partitioned into $n$ spanning trees. If $n\geq 3$ then none of these trees can be a path.
\end{theorem}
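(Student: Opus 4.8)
The first assertion I would establish by a direct construction. Since a spanning tree on $2n$ vertices has $2n-1$ edges and $GW_{2n}$ has $\binom{2n}{2}=n(2n-1)$ edges, an edge-partition into $n$ spanning trees is necessarily perfect; in particular the hub has degree $1$ or $2$ in each tree, and (as the argument below shows) no outer vertex and the hub may be ``overloaded''. One then distributes the $2n-1$ spokes accordingly and groups the outer--outer chords into the $n$ trees; I would present this as $n-1$ double-star-like trees built on a crossing family of $n-1$ ``long'' diagonals of the $(2n-1)$-gon (as in Theorem~\ref{thm_sqrtlayers}), together with one more tree formed from the $2n-1$ remaining edges, choosing the crossing family and left/right assignment so that nothing is overloaded. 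As this is the easy half I would only sketch it; for small $n$ one verifies an explicit partition into (non-path) spanning trees by hand.

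\textbf{No spanning path: structure.} Suppose for contradiction that in an edge-partition of $GW_{2n}$ into $n$ plane spanning trees some tree $T$ is a spanning path, and let $c$ have degree $d\in\{1,2\}$ in $T$. The geometric key is that, because $c$ is the exact centre, an outer--outer chord $\{x,y\}$ crosses a spoke $cw$ precisely when $w$ lies strictly on the shorter of the two arcs cut off by $\{x,y\}$ (the centre always lies on the longer-arc side). Hence no edge of $T\setminus\{c\}$ may separate $c$ from a neighbour of $c$, so removing $c$ splits the path $T$ into $d$ plane Hamiltonian subpaths, each spanning a contiguous arc of the $2n-1$ outer points and having its hub-neighbour at the end of that arc facing the cut.

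\textbf{No spanning path: the charging argument.} For an outer vertex $w$ let $\delta_w=\{w-1,w+1\}$ be the short diagonal covering only $w$, and $h_w=\{w,w+1\}$; there are $2n-1$ of each. In whichever tree $T_j$ of the partition contains $\delta_w$, the vertex $w$ is incident only to its two hull edges $\{w-1,w\}$ and $\{w,w+1\}$ (every other edge at $w$ crosses $\delta_w$), and since $T_j$ spans $w$ it must contain one of them; charge $\delta_w$ to such a hull edge. Were two short diagonals $\delta_w,\delta_{w'}$ charged to the same hull edge $e$, then $e$ would lie in the trees of both, forcing those trees to coincide; but $e$ would then be a hull edge at both $w$ and $w'$, so $w'=w\pm1$, whence $\delta_w$ and $\delta_{w'}$ cross --- impossible in one plane tree. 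So the charging is an injection, hence a bijection, from the $2n-1$ short diagonals onto the $2n-1$ hull edges, and consequently \emph{every tree in the partition contains exactly as many hull edges as short diagonals}. In the path $T$ a short diagonal $\delta_w$ can occur only if $w$ is a leaf of $T$ --- otherwise the two path-edges at $w$, forced to avoid $\delta_w$, are $\{w-1,w\}$ and $\{w,w+1\}$, completing a triangle with $\delta_w$ --- and the hub-neighbour $u$ (even if it is a leaf) contributes no short diagonal, since the spoke $cu\in T$ crosses $\delta_u$. Thus if $d=1$ the only outer leaf of $T$ is its one non-hub endpoint, so $T$ has at most one short diagonal and hence at most one hull edge; but $T\setminus\{c\}$ is a Hamiltonian path of the $(2n-1)$-gon, which by a one-line Euler count (overlay it with the polygon boundary) uses at least two of the polygon's hull edges, all lying in $T$ --- contradiction.

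\textbf{The main obstacle.} The case $d=2$ is where the real work lies. Now $T\setminus\{c\}$ is two Hamiltonian subpaths on complementary arcs; each subpath contributes at least one hull edge of the $(2n-1)$-gon (Euler again, with the degenerate single-point arc handled by noting that its hub-neighbour is then a leaf whose short diagonal is an arc-crossing edge, hence absent from $T$), so $T$ has at least two hull edges, while it has at most two short diagonals (one per leaf). To finish I would show this inequality is strict --- typically because a leaf sitting at the cut-facing end of an arc again fails to supply a short diagonal to $T$ --- so that $T$ has strictly fewer short diagonals than hull edges, contradicting the bijection. Making this count airtight uniformly in $n$ and in the arc split (rather than in small cases only) is the delicate point; an alternative to keep in reserve is a forced-edge chase showing that, once $T$ is a path, some other tree of the partition is driven to be a star that monopolises all edges at an outer vertex, leaving yet another tree unable to span it, as already happens for $n=3$.
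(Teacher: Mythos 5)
Your charging argument is correct as far as it goes: in any partition of $GW_{2n}$ into plane spanning trees, charging each short diagonal $\delta_w$ to a hull edge at $w$ inside the same tree is well defined and injective, so indeed every tree contains exactly as many hull edges as short diagonals, and your $d=1$ case does follow (the ``Euler count'' gloss is not the right justification, but the fact is true: the first and the last edge of a plane Hamiltonian path on convex points must be edges of the convex hull of its vertex set). The genuine gap is exactly where you place it, in the main case $d=2$ --- and the strictness $s_T<h_T$ you hope to extract from the path alone is in fact false. Take $W_8$ ($n=4$), outer vertices $v_1,\dots,v_7$ in circular order and hub $v_0$, and the plane spanning path $\spath{v_2,v_1,v_3,v_0,v_7,v_4,v_6,v_5}$: its edges are $v_1v_2$, $v_1v_3$, $v_3v_0$, $v_0v_7$, $v_7v_4$, $v_4v_6$, $v_6v_5$, pairwise non-crossing; its leaves are $v_2$ and $v_5$, it contains both short diagonals $\delta_{v_2}=v_1v_3$ and $\delta_{v_5}=v_4v_6$, and exactly the two hull edges $v_1v_2$ and $v_5v_6$, so $s_T=h_T=2$. (Your degenerate one-vertex-arc subcase as written also only gives $s_T\le 1\le h_T$, again not strict.) So no refinement of this count over the single path $T$ can close the case; any contradiction must involve the other trees of the partition beyond the bijection. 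This is precisely where the paper argues differently: it looks at the $2n-1$ longest edges $E_{n-1}$, which pairwise cross unless adjacent, so under $n$ colours they split into $n-1$ adjacent ``wedge'' pairs and one singleton; each wedge, having the hub inside and points on both sides, immediately forces a vertex of degree at least $3$ in its tree, and each wedge colour can absorb at most one radial edge, so the remaining (singleton) colour must take at least $n\ge 3$ radial edges at the hub --- killing the last candidate path.

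The first half of the statement is also not established by your proposal. ``Choosing the crossing family and left/right assignment so that nothing is overloaded'' is exactly the content that needs proof: after removing $n-1$ double stars, the leftover $2n-1$ edges must be shown to form a plane spanning tree, and checking small $n$ by hand does not cover all $n$. The paper gives an explicit construction valid for every $n$: one colour receives $n$ consecutive radial edges; each other colour receives its wedge pair from $E_{n-1}$ plus one radial edge forming a path of length $3$; and the edges of $E_{n-2},\dots,E_1$ are grouped into $2n-1$ fans, one per outer vertex, assigned to colours so that every class becomes a plane balanced double star. You would need to supply a construction at this level of explicitness (or an argument of comparable generality) for the partition claim.
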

\begin{proof}
In the following, we color the edges of $GW_{2n}$ that each class is plane and spanning. %
Let $v_{0}$ be the central vertex and let the other vertices be $v_1,\ldots, v_{2n-1}$ in cyclic order.
The complete graph has edges of varying length between the vertices $v_1,\ldots, v_{2n-1}$, and we can
use $E_1,\ldots,E_{n-1}$ to denote the length classes of the edges, from short to long. The edges involving $v_{0}$ are
called the radial edges.
There are $2n-1$ edges in each length class and also $2n-1$ radial edges.

We first consider the length class $E_{n-1}$, then the radial edges, and then $E_{n-2},\ldots,E_1$, and see how we must
color these edges to produce plane spanning trees.

Given that there are $2n-1$ edges in $E_{n-1}$, to be divided over $n$ colors, and every non-adjacent
pair of edges intersect, we will get these edges in $n-1$ pairs and one singleton, see Figure~\ref{fig:wheel}(a).
Call the color of the singleton edge in $E_{n-1}$ red.
The pairs must be two adjacent edges (they have a shared vertex), forming a wedge with point $v_{0}$ in between and at
least one point to each side of the wedge if there are at least six points. This immediately shows that all spanning trees with non-red color are not paths. To show that a red spanning tree also cannot be a path, we
observe that $v_0$ can have at most one edge in each non-red color (otherwise we make a cycle or an intersection
within that color). Therefore, it must have $n$ incident red edges,
showing that the red spanning tree is not a path either if $n\geq 3$ (Figure~\ref{fig:wheel}(b)).

We proceed to show that the geometric graph contains $n$ plane spanning trees. We color the radial edges by using
the red color $n$ times. There are two options when we do not have crossings or cycles, and they are symmetric.
The remaining radial edges get the other $n-1$ colors, one for each, and such that a path of length $3$ appears
in each color. Then we assign the edges in $E_{n-2},\ldots,E_1$ a color at once.
We make $2n-1$ fans, one for each of $v_1,\ldots, v_{2n-1}$, consisting of one edge of each length
class (there are two choices: clockwise and counterclockwise), see Figure~\ref{fig:wheel}(c) for the two fans of one color.
Each fan can be assigned a color so that all spanning  trees are isomorphic balanced double stars, completing the
partitioning into $n$ plane spanning trees (Figure~\ref{fig:wheel}(d)).
\end{proof}
\begin{figure}[b]
\centering
\includegraphics{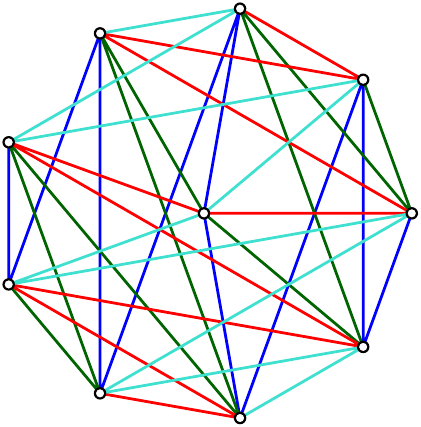}
\caption{$GW_{2n}$ contains $n-1$ plane spanning paths.}
\label{fig:pathsinwheel}
\end{figure}
Interestingly, $GW_{2n}$ contains $n-1$ plane spanning paths, via the zigzag construction
used for points in convex position (as described in~\cite{partitions_into_trees}).
When the path passes the center point, it picks it up using two radial edges
instead of a long edge, see Figure~\ref{fig:pathsinwheel}. But to get one more plane spanning tree in $GW_{2n}$,
all paths must be trees.

We now return to $GK_{n}$, the complete geometric graph on any set $S$ of $n$ points in general position in the plane.

\begin{theorem}
If $n \geq 4$ then $GK_{n}$ contains at least 2 edge-disjoint plane spanning paths.
\end{theorem}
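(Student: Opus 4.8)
The plan is to use two \emph{monotone} spanning paths for two cleverly chosen sweep directions. For a generic direction $d$, sorting $S$ along $d$ yields a $d$-monotone polygonal path $P_d$ on $S$; such a path is automatically plane (a monotone polygonal chain is simple) and spanning. So it suffices to exhibit two directions $d_1\neq d_2$ (mod~$\pi$) such that $P_{d_1}$ and $P_{d_2}$ are edge-disjoint. An edge $\overline{ab}$ belongs to $P_d$ exactly when $a$ and $b$ are consecutive in the $d$-order of $S$, i.e.\ when the slab bounded by the two lines orthogonal to $d$ through $a$ and through $b$ contains no other point of $S$. Thus the whole statement reduces to the claim: for $n\geq 4$ there are two directions whose sorted orders of $S$ have no common consecutive pair.

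To attack that claim I would set up the standard rotating-direction machinery. As $d$ rotates through $[0,\pi)$ the sorted order of $S$ changes only at the $\binom{n}{2}$ critical directions orthogonal to the point-pairs, and at each such direction two currently adjacent points swap, which replaces at most two edges of the current monotone path; hence $[0,\pi)$ is cut into at most $\binom{n}{2}$ open intervals, each giving one fixed path. Fix $d_1$ in one interval, with path $P_1$ having edges $e_1,\dots,e_{n-1}$. For each $e_i=\overline{a_ib_i}$ let $B_i\subseteq[0,\pi)$ be the set of directions for which $e_i$ is a monotone-path edge; $B_i$ is nonempty, since it contains a neighbourhood of the direction orthogonal to $e_i$. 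What remains is to show $\bigcup_i B_i\neq[0,\pi)$, for then any $d_2$ in the complement gives a $P_2=P_{d_2}$ edge-disjoint from $P_1$, and $n\geq 4$ guarantees $P_1$ is not already the whole path in a trivial way.

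The main obstacle is precisely this last step: controlling how much of the circle of directions the $n-1$ edges of $P_1$ can ``block''. The clean route is to argue that each $B_i$ is a single arc (the empty-slab property of a fixed pair persists over a contiguous range of directions) and that these arcs cannot tile all of $[0,\pi)$ when $n\geq 4$; a more robust alternative is a counting argument on edge--interval incidences showing $\sum_i|B_i|<\pi$ (an average edge is a monotone-path edge for only a $\Theta(1/n)$ fraction of directions, and one must rule out pathological clustering for the specific edges of $P_1$). Should this global estimate prove unwieldy, I would fall back to a hull-based construction in the spirit of Theorem~\ref{thm:2trees}: take a convex-hull edge $\overline{rb}$, so that all of $S\setminus\{r,b\}$ lies on one side of its supporting line, build one plane spanning path of $S$ with endpoints $r$ and $b$, and a second plane spanning path that uses the hull edge $\overline{rb}$ itself together with a complementary routing of $S\setminus\{r,b\}$, verifying edge-disjointness by the same left/right incidence argument used in the proof of Theorem~\ref{thm_sqrtlayers}, and finishing the small cases $n=4,5$ by inspection as elsewhere in the paper. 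I note that either way the method delivers only \emph{two} paths — pushing to three would require three pairwise ``order-independent'' directions (or three compatible routings) — which is exactly why the argument resists extension, matching the remark following the statement.
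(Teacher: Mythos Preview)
Your monotone-path strategy has a genuine gap that cannot be patched: the central claim that $\bigcup_i B_i\neq[0,\pi)$ is \emph{false} in general, so neither of your two proposed arguments (single-arc structure of $B_i$, or the averaging bound $\sum_i|B_i|<\pi$) can succeed. A concrete counterexample already appears at $n=4$. Take a triangle $a,c,d$ with an interior point $b$ placed very close to the edge $ac$, say
\[
a=(0,0),\quad b=(1,\varepsilon),\quad c=(2,-\varepsilon),\quad d=(1-\delta,M)
\]
for small $\varepsilon,\delta>0$ and large $M$. One checks directly (there are only $\binom{4}{2}=6$ direction intervals) that the six monotone paths are
\[
\{ad,bd,bc\},\ \{ab,bd,cd\},\ \{ab,bc,cd\},\ \{ac,bc,bd\},\ \{ac,ab,bd\},\ \{ab,bc,ad\},
\]
and every pair among them shares an edge. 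Intuitively, whenever the sweep direction is not almost perpendicular to $ac$, the point $b$ lands between $a$ and $c$ in the sorted order, forcing both $ab$ and $bc$ into the path unless $d$ also falls between them, in which case $bd$ appears; for the remaining narrow range of directions one again gets $bd$. Hence every monotone path contains $bd$, or contains both $ab$ and $bc$, and no two are edge-disjoint. (The two edge-disjoint plane spanning paths that \emph{do} exist here, e.g.\ $b\text{--}a\text{--}d\text{--}c$ and $a\text{--}c\text{--}b\text{--}d$, are precisely the non-monotone ones.) So sweeping alone is not enough; at least one of the two paths must be built by some other mechanism.

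Your fallback sketch is too vague to rescue the argument, and in any case the paper's proof proceeds quite differently. It fixes an extremal point $p$, splits $S\setminus\{p\}$ by a line $\ell$ through $p$ into two nearly equal halves $A$ and $B$, and takes the first path to be a plane \emph{zigzag} path that starts at $p$, steps once to some $q\in B$, and thereafter alternates across $\ell$; the second path consists of two \emph{radial} subpaths joined at $p$, visiting $A$ in clockwise order around $p$ and $B$ in (counter)clockwise order around $p$, with a small case distinction on whether $q$ is the last point of $B$. Edge-disjointness is then immediate, since every zigzag edge except $pq$ crosses $\ell$ while every radial edge stays on one side. This ``one crossing path versus one non-crossing path'' separation is the key idea your proposal is missing.
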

\begin{proof}
Let $S$ be a set of $n$ points in the plane and let $p$ be an extremal point of $S$.
Order the points of $S\setminus\{p\}$ clockwise around $p$.
Partition $S\setminus\{p\}$ into two (disjoint) sets $A$ and $B$, such that $A\cup B = S\setminus\{p\}$ and $|B|-1 \leq |A| \leq |B|$.
We denote by $\ell$ a line through $p$ (but no other point of $S$) that is separating $A$ from $B$ (see \figurename~\ref{fig:partition}).

\begin{figure}[htb]
\centering
\includegraphics{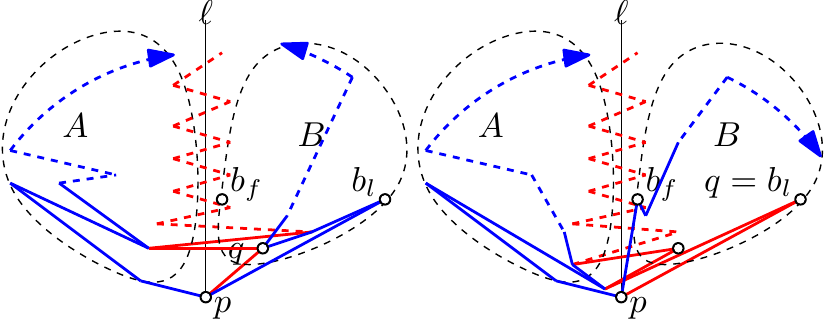}
\caption{Partition of $S\setminus\{p\}$ and the two edge-disjoint
 spanning paths. Left: $q\neq b_l$. Right: $q=b_l$.}
\label{fig:partition}
\end{figure}

We will construct the two edge-disjoint paths, for simplicity call them red and blue.
The red path (${\cal R}=G(V,E_1)$) we simply construct as a plane zigzag path starting at $p$, with a point $q$ in $B$ as a second point, and with every edge of $\cal R$, except $pq$, intersecting $l$.
(An algorithm for constructing such a zigzag path is described by Hershberger and Suri~\cite{hershberger_suri}, see also Abellanas et al.~\cite{bipartite_embeddings}.)

The blue path (${\cal B}=G(V,E_2)$) consists of two subpaths, ${\cal B}_A$ and ${\cal B}_B$, joined at $p$.
Observe that no red edge (edge of $\cal R$) connects two points of $A\cup\{p\}$ or two points of $B$.
Thus, any (blue) path completely contained in $A\cup\{p\}$ is edge-disjoint to $\cal R$.
We choose the path starting at $p$ and connecting the points of $A$ in clockwise order around $p$ for ${\cal B}_A$.

Let $b_f$ and $b_l$ be the first and last, respectively, point of $B$ in clockwise order around $p$.
If $q=b_l$ then we connect $p$ with $b_f$ and continue on the points of $B\setminus\{b_f\}$ in clockwise order around $p$ for ${\cal B}_B$ (see \figurename~\ref{fig:partition}~(right)).
Otherwise, we construct ${\cal B}_B$ with $pb_l$ as the first edge and then finish the path by connecting the points of $B\setminus\{b_l\}$ in counter clockwise order around $p$ (see \figurename~\ref{fig:partition}~(left)).

Connecting ${\cal B}_A$ and ${\cal B}_B$ at $p$ results in the plane spanning path ${\cal B}$ that is edge-disjoint to the plane spanning path ${\cal R}$.
\end{proof}

\section{Packing Spanning Trees with low Degree}\label{sec:degree}

The edge-disjoint plane spanning trees we studied in the previous sections are somehow extreme in terms of vertex degree. The trees constructed in Section~\ref{sec:trees} always contain at least one vertex of degree $\Omega(n)$, while in Section~\ref{sec:paths} we consider spanning paths. Thus the question arises if intermediate results are possible.
In the following, we obtain a trade-off between the number of edge-disjoint spanning trees and the maximum degree of each vertex.

\begin{figure}[b]
\centering
\includegraphics{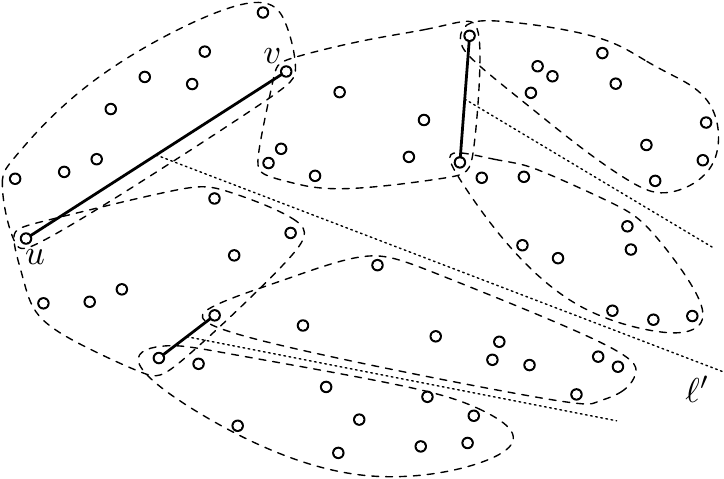}
\caption{The hierarchical clustering strategy.}
\label{fig_decomp}
\end{figure}

\begin{figure*}[t]
\centering
\includegraphics[width=\columnwidth]{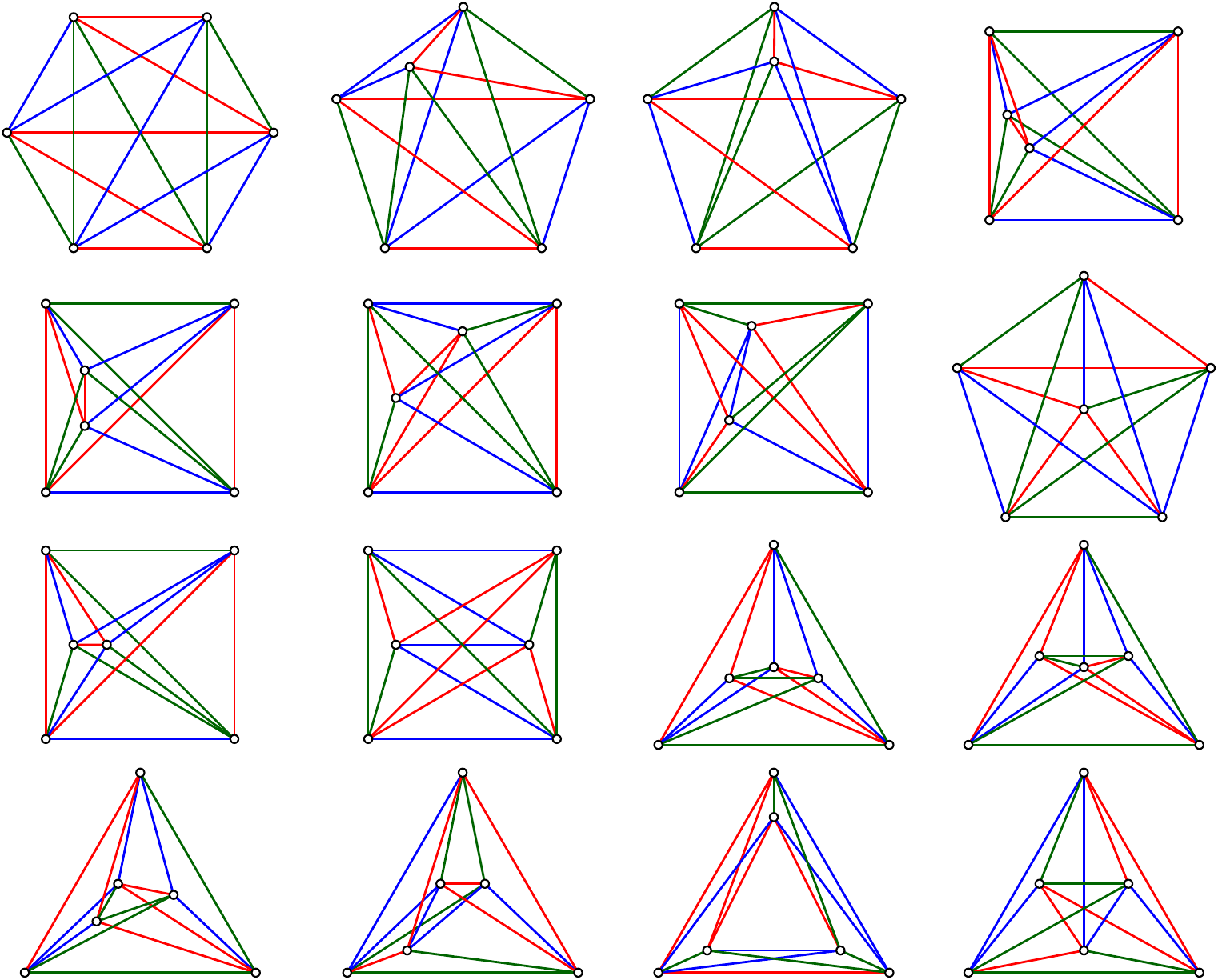}
\caption{The 16 combinatorially different point sets
for $n=6$~\cite{aak-eotsp-01a,a-otdb-06}, with 3 edge-disjoint plane spanning trees each.}
\label{fig:6points}
\end{figure*}

\begin{theorem}\label{theo_hierarch}
For any set $S$ of $n$ points and $k\leq \sqrt{n/12}$ there exist $k$ edge-disjoint plane spanning trees $T_1,\ldots, T_k$ on~$S$ such that the maximum degree of any tree is in $O(k^2)$. %
Also, the diameter of each tree is in $O(\log(n/k^2))$.
\end{theorem}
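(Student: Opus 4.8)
The plan is to combine the crossing-family construction of Theorem~\ref{thm_sqrtlayers} with a hierarchical (recursive) clustering of the point set. Since $k \le \sqrt{n/12}$, by the result of Aronov et al.\ there is a crossing family $F = \{e_1,\dots,e_k\}$ of $k$ pairwise-crossing edges in $S$. The trouble with the double-star trees $T_{e_i}$ is that each has a vertex (an endpoint of $e_i$) of degree $\Omega(n)$. The idea is to replace each ``star'' side of a double star by a bounded-degree tree of logarithmic depth that connects a cluster of points to a designated representative, instead of connecting all points directly to the endpoint of $e_i$.

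Concretely, first I would partition $S$ into clusters using a recursive ham-sandwich / balanced-separator scheme (as suggested by Figure~\ref{fig_decomp}): repeatedly split the current point set into two halves by a line so that after about $\log(n/k^2)$ levels we reach clusters each of size roughly $\Theta(k^2)$. This yields a binary ``clustering tree'' of depth $O(\log(n/k^2))$ whose leaves are the small clusters. Within each small cluster of size $O(k^2)$, I would apply Theorem~\ref{thm_sqrtlayers} directly: such a cluster itself contains $k$ edge-disjoint plane spanning trees (it has a crossing family of size $\sqrt{(k^2)/12} \ge$ the required number, after a constant-factor adjustment of the size bound), giving each of the $k$ global trees one intra-cluster spanning tree per cluster. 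Then, to glue clusters together, I would designate one representative point per cluster (and, recursively, per internal node of the clustering tree) and build, for each color $i \in \{1,\dots,k\}$, a tree on the representatives that mimics the clustering tree's hierarchy. The key geometric point is that the dividing line at each internal node lets us route the $k$ colors through that node's representatives using a crossing-family-style double-star at the level of representatives, so that each representative absorbs only $O(k)$ edges per merge and participates in $O(\log(n/k^2))$ merges. Summing, the maximum degree is $O(k) \cdot O(\log(n/k^2))$ from the hierarchy plus $O(k^2)$ from the intra-cluster trees, and one should check the bookkeeping gives $O(k^2)$ overall (note $k \le \sqrt{n/12}$ so $\log(n/k^2) = O(\log n)$, and a little care is needed so the hierarchy contribution does not dominate — choosing the leaf cluster size as $\Theta(k^2)$ rather than $\Theta(k)$ is what balances these two terms). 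The diameter bound $O(\log(n/k^2))$ follows since each tree is a union of a depth-$O(\log(n/k^2))$ hierarchy of constant-diameter (or low-diameter) gadgets: within clusters the trees are the double stars of Theorem~\ref{thm_sqrtlayers}, which have diameter $O(1)$, and the hierarchy contributes an additional $O(\log(n/k^2))$ factor.

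The two things to verify carefully are (i) \emph{planarity of each color class}: the intra-cluster trees live in disjoint regions determined by the separating lines, and the inter-cluster edges of color $i$ are confined to the ``corridor'' near each separating line in a way analogous to the $T_e$ construction, so no two edges of the same color cross; here one uses that at each internal node the split line is common to all colors, and the double-star argument from the proof of Theorem~\ref{thm_sqrtlayers} shows the gadgets on the two sides stay on their own sides; and (ii) \emph{edge-disjointness across colors}: two colors can only share an edge between two cluster-representatives that are siblings (or between a point and its cluster representative), and the same left/right dichotomy used in Theorem~\ref{thm_sqrtlayers} (''if $r$ is left of $\ell_{pq}$ then $pr,qs$ are in one tree and $ps,qr$ in the other'') rules this out at every level. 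I expect the main obstacle to be exactly this simultaneous bookkeeping: showing that a single hierarchical decomposition can be ``shared'' by all $k$ colors while keeping every color class plane, keeping all $k$ classes pairwise edge-disjoint, \emph{and} keeping every vertex degree at $O(k^2)$ — in particular making sure the cluster representatives, which are the busiest vertices, do not blow up. The choice of leaf-cluster size $\Theta(k^2)$, together with routing only $O(k)$ edges through each representative at each of the $O(\log(n/k^2))$ levels, is what makes all three constraints compatible.
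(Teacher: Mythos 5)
Your high-level plan (clusters of size $\Theta(k^2)$, Theorem~\ref{thm_sqrtlayers} inside each cluster, a logarithmic-depth hierarchy) matches the spirit of the paper, but the gluing step --- which you yourself flag as the main obstacle --- is exactly where the proposal breaks down, and it is not a matter of routine bookkeeping. First, you cannot route $k$ colors between two designated representatives: a single pair of points spans only one edge, so each merge needs $k$ \emph{distinct} connector edges between the two sides, and you must then show that, for each color, its connector edge crosses neither the intra-cluster edges of that color in either child cell nor that color's connector edges at other nodes of the hierarchy; your top-down median splitting gives no control over this, since a connector edge necessarily enters both cells and can cross arbitrary intra-cluster edges there. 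Second, the degree accounting as stated is wrong: a representative that ``absorbs $O(k)$ edges per merge and participates in $O(\log(n/k^2))$ merges'' has degree $\Theta(k\log(n/k^2))$, which is not $O(k^2)$ when $k$ is small compared to $\log n$ (e.g.\ constant $k$, large $n$); choosing leaf clusters of size $\Theta(k^2)$ does not repair this.

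The paper's proof avoids connector edges altogether, and this is the missing idea. Instead of splitting top-down and linking representatives, it \emph{peels off} clusters bottom-up using $(12k^2-2)$-edges: take an edge $uv$ with exactly $12k^2-2$ points of $S$ on one side; these points together with $u,v$ form a cluster of $12k^2$ points, which by Theorem~\ref{thm_sqrtlayers} carries $k$ edge-disjoint plane spanning trees. A line through the midpoint of $uv$ splits the rest into two halves, and one recurses on $S_u\cup\{u\}$ and $S_v\cup\{v\}$, always choosing the next $(12k^2-2)$-edge so that the inherited point ($u$ resp.\ $v$) lies \emph{inside} the new cluster. Adjacent clusters thus share a vertex, so the $k$ trees of a child cluster are automatically attached to the $k$ trees of its parent with no extra edges; planarity and cross-color edge-disjointness then follow cluster-by-cluster from Theorem~\ref{thm_sqrtlayers} together with the separating lines, every point lies in at most two clusters of size $\Theta(k^2)$ (hence degree $O(k^2)$ unconditionally), and since each cluster tree is a double star of diameter $3$ and the cluster hierarchy is a balanced binary tree on $O(n/k^2)$ clusters, the diameter is $O(\log(n/k^2))$. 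Without this shared-vertex mechanism (or some concrete substitute that simultaneously handles planarity, $k$-fold edge-disjointness, and the degree bound at every merge), your proposal remains an outline with a genuine gap rather than a proof.
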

\begin{proof}
The general idea of the proof is to ``peel off'' small clusters of points and connect each of the clusters with $k$ edge-disjoint spanning trees independently.
Consider a $(12k^2-2)$-edge, i.e., an edge $uv$, $u,v \in S$, such that exactly $12k^2-2$ points of~$S$ are strictly to the left of the directed line~$\ell$ through~$uv$.
Consider the set $C_1$ of these $12k^2$ points and construct $k$ edge-disjoint plane spanning trees of~$C_1$ using Theorem~\ref{thm_sqrtlayers}.
Now consider the midpoint between $u$ and $v$.
Let $\ell'$ be a line through that midpoint that splits the remaining point set $S \setminus C_1$ into two subsets $S_u$ and $S_v$, each containing at most $\ceil{(n-12k^2)/2}$ points.

Since the two subsets are separated by $\ell'$, we can recursively repeat a similar process in the two subsets independently.
That is, pick a $(12k^2-2)$-edge $u'v'$ of $S_u \cup \{u\}$ such that $u$ is contained among the $12k^2$ points separated by $u'v'$ but is not an endpoint of the edge (such an edge must always exist).
We construct $k$ plane spanning trees on this subset, which are connected to the spanning trees of~$C_1$ via~$u$. We treat $S_v \cup \{v\}$ analogously (see Figure~\ref{fig_decomp}).
The recursion stops when we are not able to partition the remaining points into two sets of size at least $12k^2-1$;
here, we simply add the remaining points of the subset to the last cluster.
Note that this cluster must have between $12k^2$ and $36k^2-3$ points, thus we can still create $k$ edge-disjoint spanning trees using Theorem~\ref{thm_sqrtlayers}.

We construct the $k$ spanning trees of $S$ by assigning one of the spanning trees of each cluster arbitrarily to each of the trees $T_1, \ldots, T_k$.
We claim that the resulting trees are indeed spanning:
By construction, each tree is spanning in the cluster;
hence points of the same cluster will be connected in $T_i$ (for all $i\leq k$).
Moreover, the hierarchical construction certifies that each cluster shares a point with the cluster constructed in the previous step of induction.
Likewise, planarity of each tree is guaranteed.

We obtain at most $N = \floor{n/(12k^2-1)}$ clusters which are arranged such that they form a balanced binary tree with $C_1$ as root.
Note that the spanning trees constructed in the proof of Theorem~\ref{thm_sqrtlayers} have diameter~$3$.
Thus, the diameter of each spanning tree is at most $6\ceil{\log_2 N}$.
The degree bound follows from the fact that any point of $S$ can only belong to at most two clusters (and each cluster has $\Theta(k^2)$ points).
\end{proof}

\paragraph{Acknowledgments.} Research was initiated during the 10th European Research Week on Geometric Graphs (GGWeek 2013), Illgau, Switzerland. We would like to thank all participants of the 10th European Research Week on Geometric Graphs for fruitful discussions.
T.~H.\ was supported by the Austrian Science Fund (FWF): P23629-N18.
M.~K.\ was partially supported by MEXT KAKENHI No.~17K12635.
A.~P.\ is supported by an Erwin Schr\"odinger fellowship, Austrian Science Fund (FWF): J-3847-N35.

% O.A.\ and A.P.\ are partially supported by the ESF EUROCORES programme EuroGIGA - ComPoSe, Austrian Science Fund (FWF): I~648-N18. T.H.\ is supported by the Austrian Science Fund (FWF): P23629-N18 `Combinatorial Problems on Geometric Graphs'. B.S. is supported by the Netherlands Organisation for Scientific Research (NWO) under project no.~639.023.208.
% M.L. is supported by the Netherlands Organisation for Scientific Research (NWO) under grant 639.021.123.

\small
\bibliographystyle{abbrv}
\bibliography{packing}

\end{document}